
\documentclass[letterpaper, 10 pt, conference]{ieeeconf}
\IEEEoverridecommandlockouts
\overrideIEEEmargins
\usepackage{graphicx}
\usepackage{amsmath}
\usepackage{amssymb}

\usepackage{amsthm}
\usepackage{cite}
\usepackage{hyperref}
\usepackage{accents}
\usepackage{centernot}
\usepackage{color}
\usepackage{algorithm}
\usepackage[noend]{algpseudocode}
\usepackage[font=scriptsize]{caption}
\usepackage{subcaption} 
\usepackage{mathtools}
\usepackage{soul}

\newtheorem{ass}{Assumption}
\newtheorem{thm}{Theorem}

\newtheorem{prop}{Proposition}

\definecolor{mygreen}{rgb}{0,0.6,0.3}
\definecolor{mypurple}{rgb}{0.75,0.0,0.75}

\title{\LARGE \bf
A Simple and Efficient Tube-based Robust Output Feedback Model Predictive Control Scheme
}

\author{Joseph Lorenzetti, Marco Pavone	
\thanks{The authors are with the Department of Aeronautics and Astronautics, Stanford University, Stanford CA. Emails: \{jlorenze, pavone\}@stanford.edu.}
\thanks{This work was supported by the Office of Naval Research  (Grant N00014-17-1-2749). Joseph Lorenzetti is supported by the Department of Defense (DoD) through the National Defense Science and Engineering Fellowship (NDSEG) Program.}
}

\begin{document}

\maketitle
\thispagestyle{empty}
\pagestyle{empty}

\begin{abstract}
The control of constrained systems using model predictive control (MPC) becomes more challenging when full state information is not available and when the nominal system model and measurements are corrupted by noise. Since these conditions are often seen in practical scenarios, techniques such as robust output feedback MPC have been developed to address them. However, existing approaches to robust output feedback MPC are challenged by increased complexity of the online optimization problem, increased computational requirements for controller synthesis, or both. In this work we present a simple and efficient methodology for synthesizing a tube-based robust output feedback MPC scheme for linear, discrete, time-invariant systems subject to bounded, additive disturbances. Specifically, we first formulate a scheme where the online MPC optimization problem has the same complexity as in the nominal full state feedback MPC by using a single tube with constant cross-section. This makes our proposed approach simpler to implement and less computationally demanding than previous methods for both online implementation and offline controller synthesis. Secondly, we propose a novel and simple procedure for the computation of robust positively invariant (RPI) sets that are approximations of the minimal RPI set, which can be used to define the tube in the proposed control scheme.
\end{abstract}

\section{Introduction}
Model predictive control (MPC) is a useful framework for the optimal control of constrained systems due to its ability to \textit{explicitly} account for state and control constraints. This is accomplished by exploiting a model of the system, where the control law can be implicitly defined as the solution to a finite-horizon optimization problem that is solved online in a receding horizon fashion. For a broad survey of MPC theory and additional references see \cite{RawlingsMayneEtAl2017}. Early work in the development of MPC theory focused on the case where the full state was assumed to be known, and where no disturbances affected the system behavior, which we will refer to as \textit{nominal} MPC. However in practice these assumptions typically fail, and so \textit{robust output feedback} MPC schemes were developed to address the more general case where state estimators are employed and when the nominal system is subject to bounded, additive disturbances.

{\em Related Work:}
One approach used to handle the robust output feedback MPC problem uses a min-max optimization formulation \cite{FindeisenAllgoewer2004, CoppHespanha2014}. However such formulations result in optimization problems with increased complexity and can therefore be less desirable for real-time control applications. On the other hand, tube-based approaches generally formulate the optimization problem based on nominal system dynamics, and then incorporate an ancillary feedback controller to ensure the nominal system is tracked with bounded error. Such formulations rely on an offline analysis to verify robustness, which enables the simple form of the online optimization problem that is advantageous for real-time control. Of course there are disadvantages with these approaches as well, namely that they can be sub-optimal \cite{Mayne2016}, can be conservative, and can be computationally difficult to synthesize.

One early example of a tube-based scheme is \cite{MayneRakovicEtAl2006} which relies on the computation of robust positively invariant (RPI) sets to bound the error between the nominal and real systems, which are then used to tighten the constraints appropriately. They also modify the optimization problem to make the initial nominal state a decision variable. The work in \cite{SuiFengEtAl2008} takes a similar approach, but uses a moving horizon estimator and tightens the constraints sequentially. To reduce conservatism in the constraint tightening seen in \cite{MayneRakovicEtAl2006}, \cite{KoegelFindeisen2017} considers the \textit{coupled} error dynamics and also tightens the constraints sequentially. The method defined by \cite{BrunnerMuellerEtAl2016} also tries to reduce conservatism by using a set-valued moving horizon estimator that seeks to provide tighter error bounds. Inspired by \cite{MayneRakovicEtAl2006} and \cite{KoegelFindeisen2017}, our first contribution is to define a tube-based method that is more amenable to real-world applications where computational efficiency is critical by simplifying the controller synthesis and decreasing the online computational complexity with respect to previous approaches. In fact, the online computational complexity of the approach matches the nominal full feedback MPC case, in contrast to \cite{MayneRakovicEtAl2006, SuiFengEtAl2008, BrunnerMuellerEtAl2016}. This formulation could be seen as an extension of the ideas presented in \cite{MayneLangson2001} to the output feedback case, and is also similar to \cite{KoegelFindeisen2017}, which originally showed the advantages of using a single tube over the approach in \cite{MayneRakovicEtAl2006} (which does not consider coupled error dynamics). However in contrast to \cite{KoegelFindeisen2017} we propose to use a constant cross-section tube based on an approximation to the minimal RPI set, which makes controller synthesis and online implementation more simple and efficient.

As our proposed approach utilizes RPI sets, our second contribution is a novel RPI set computation method that is simple to implement and is computationally efficient. This approach  leverages the work in \cite{SchulzeDarupTeichrib2019} where an approach for the efficient computation of RPI sets is developed based on a clever combination of the methods in \cite{KolmanovskyGilbert1998} and \cite{RakovicKerriganEtAl2005}. We also take advantage of the work by \cite{Trodden2016}, who proposes a method for computing RPI sets that requires only a single linear program. 

{\em Statement of Contributions:}
To summarize, in this paper we present a simple and efficient tube-based robust output feedback MPC scheme. The proposed method is efficient in both the \textit{offline} synthesis of the controller and in the \textit{online} implementation. Such efficiency is crucial for enabling robust constrained control of real-world systems where the system's state dimension may be large or when the dynamics evolve quickly.
Specifically, in our approach we first propose to use a formulation of the online optimization problem that has reduced complexity over previous methods, and where the overall control scheme is efficient to synthesize. Second, we propose a novel, simple, and computationally efficient technique for computing RPI sets. While we demonstrate the use of this RPI computation method with respect to the proposed MPC scheme, it is a general methodology whose scope is not limited to this work. The effectiveness of the proposed scheme is demonstrated in simulation using two examples: the control of a simple synthetic system and the control of a wind energy conversion system.

{\em Organization:} We begin our discussion in Section \ref{sec:prob} with a formal description of the problem that we are trying to solve, and define several useful mathematical concepts in Section \ref{sec:notation}. Next, in Section \ref{sec:proposed} we describe our proposed robust output feedback MPC scheme and in Section \ref{sec:computingrpi} we present a new method for computing RPI sets that can be used to synthesize our proposed controller. Then our approach is demonstrated in Section \ref{sec:examples} on a simple synthetic example as well as on an more practical example of a wind energy conversion system. Finally, we conclude with some observations and remarks in Section \ref{sec:conclusion}.

\section{Problem Formulation} \label{sec:prob}
In this work we consider systems described by linear, discrete, time-invariant state space models of the form
\begin{equation} \label{eq:dyn}
\begin{split}
x_{k+1} &= Ax_k + Bu_k + w_k, \\
y_k &= C x_k + v_k, \quad z_k = H x_k,\\
\end{split}
\end{equation}
where $x \in \mathbb{R}^n$ is the state of the system, $u \in \mathbb{R}^m$ is the control input, $y \in \mathbb{R}^p$ is the measured output, $z \in \mathbb{R}^o$ are performance variables, $w \in \mathbb{R}^n$ are unknown process noise terms, $v \in \mathbb{R}^p$ are unknown measurement noise terms, and $A$, $B$, $C$, and $H$ are matrices of appropriate dimension.

Set-based constraints on the performance variables $z$ and the controls $u$ are also considered, which are defined by
\begin{equation} \label{eq:constraints}
z_k \in \mathcal{Z}, \quad u_k \in \mathcal{U},
\end{equation}
where $\mathcal{Z} \coloneqq \{z \:|\: H_z z \leq b_z \}$ and $\mathcal{U} \coloneqq \{u \:|\: H_u u \leq b_u \}$ are convex polyhedra and the inequalities are interpreted element-wise. It is also assumed that the noise terms $w$ and $v$ are constrained such that
\begin{equation} \label{eq:disturbances}
w_k \in \mathcal{W}, \quad v_k \in \mathcal{V},
\end{equation}
where $\mathcal{W} \coloneqq \{w \:|\: H_w w \leq b_w \}$ and $\mathcal{V} \coloneqq \{v \:|\: H_v v \leq b_v \}$ are also convex polyhedra. The following assumptions are also made about the system, the constraints, and the disturbances:
\begin{ass} \label{ass:ctrbobsv}
The pair $(A,B)$ is controllable and the pair $(A,C)$ is observable.
\end{ass}
\begin{ass} \label{ass:convexcompact}
The sets $\mathcal{Z}$, $\mathcal{U}$, $\mathcal{W}$, and $\mathcal{V}$,  are compact and contain the origin in their interior. 
\end{ass}

The control problem of interest is to optimally regulate the system \eqref{eq:dyn} to the origin while ensuring the constraints \eqref{eq:constraints} are robustly satisfied. The optimality of the control is assumed to be defined with respect to a quadratic, infinite-horizon cost function
\begin{equation} \label{eq:costfunc}
J = \sum_{k = 0}^\infty x_k^T Q x_k + u_k^T R u_k,
\end{equation}
where $Q \in\mathbb{R}^{n\times n}$ and $R \in\mathbb{R}^{m\times m}$ are symmetric, positive definite weighting matrices. Since in practice the cost is typically defined with respect to the performance variables $z$, using the positive definite matrix $Q_z$, the matrix $Q$ could be defined as $Q = H^TQ_zH + \gamma I$ where $\gamma \geq 0$ is chosen to ensure positive-definiteness of $Q$.

\subsection{State Estimator}
Since it is assumed that knowledge about the state $x$ is not directly available, a state estimator is required. For this work we assume a Luenberger estimator is used and is defined by
\begin{equation} \label{eq:estimator}
\hat{x}_{k+1} = A\hat{x}_k + Bu_k + L(y_k - \hat{y}_k), \quad \hat{y}_k = C\hat{x}_k,
\end{equation}
where $\hat{x} \in \mathbb{R}^n$ is the state estimate and $L$ is the observer gain matrix of appropriate dimension. It is assumed that $L$ is chosen such that the matrix $A-LC$ is Schur stable.

\section{Mathematical Preliminaries} \label{sec:notation}
Before describing our proposed control methodology it is useful to define some terminology that will be used throughout the remainder of this work. We begin with the definition of a robust positively invariant (RPI) set.

\subsection{Robust Positively Invariant Sets}
Consider an autonomous system with dynamics
\begin{equation} \label{eq:rpidynamics}
e_{k+1} = A_e e_k + \delta_k, \quad \phi_k = Ee_k
\end{equation}
where $e \in \mathbb{R}^{n_e}$ is the state, $\delta \in \mathbb{R}^{n_e}$ is a disturbance, $A_e \in \mathbb{R}^{n_e \times n_e}$ defines the system dynamics, and where $\phi \in \mathbb{R}^{o_e}$ defines a general output variable that has set-based constraints given by $\phi \in \Phi$. Additionally, the disturbance $\delta$ is constrained to lie in the set $\Delta \coloneqq \{\delta \:|\: H_\delta \delta \leq b_\delta \}$ and the constraint set for $\phi$ is defined as $\Phi \coloneqq \{\phi \:|\: H_\phi \phi \leq b_\phi \}$. We also assume that the sets $\Delta$ and $\Phi$ are convex, compact, contain the origin in their interiors, and that $A_e$ is Schur stable.

A set $\mathcal{R}$ is an RPI set for this system if for all $e_k \in \mathcal{R}$ and for all $\delta_k \in \Delta$ the state $e_{k+1}$ also satisfies $e_{k+1} \in \mathcal{R}$. In shorthand we write this condition as $A_e\mathcal{R} \oplus \Delta \subseteq \mathcal{R}$, where $\oplus$
represents the Minkowski sum, defined for two sets $\mathcal{A}$ and $\mathcal{B}$ as $\mathcal{A} \oplus \mathcal{B} \coloneqq \{a + b \:|\: a \in \mathcal{A}, \: b \in \mathcal{B} \}$.

Under the stated assumptions there is guaranteed to exist an RPI set for the system, and in general there may be many. Of particular interest is the \textit{minimal} RPI set, denoted by $\mathcal{R}_\infty$. The minimal RPI set has the special property that it is contained within every RPI set for \eqref{eq:rpidynamics}. Additionally, \textit{constraint admissible} RPI sets are those which also satisfy the condition $E\mathcal{R} \subseteq \Phi$.

\subsection{Set Computations}
Consider two convex, compact sets, $\mathcal{A} \subset \mathbb{R}^{n_A}$ and $\mathcal{B} \subset \mathbb{R}^{n_B}$ and a linear map $C \in \mathbb{R}^{n_A \times n_B}$. The set $C\mathcal{B}$ is defined as $C\mathcal{B} \coloneqq \{Cb \in \mathbb{R}^{n_A} \:|\: b \in \mathcal{B} \}$. The Pontryagin difference $\mathcal{A} \ominus C \mathcal{B}$ is defined as $\mathcal{A} \ominus C\mathcal{B} \coloneqq \{d \in \mathbb{R}^{n_A} \:|\: d + Cb \in \mathcal{A}, \forall b \in \mathcal{B} \}$. 

Suppose that in addition to being convex, $\mathcal{A}$ and $\mathcal{B}$ are defined using a half-space representation given by $\mathcal{A} \coloneqq \{a \:|\:H_a a \leq b_a \}$ and $\mathcal{B} \coloneqq \{b \:|\:H_b b \leq b_b \}$, then the Pontryagin difference can be computed as:
\begin{equation*}
\mathcal{A} \ominus C\mathcal{B} \coloneqq \{a \in \mathbb{R}^{n_A} \:|\:H_a a \leq b_a - \delta_a \},
\end{equation*}
where each element of the vector $\delta_a$ is given by $\delta_{a,i} \coloneqq S_\mathcal{B}( h_{a,i}^TC)$, where $h_{a,i}^T$ is the $i^\text{th}$ row of the matrix $H_a$ and  $S_\mathcal{B}(h_{a,i}^TC)$ is defined as the linear program $S_\mathcal{B}(h_{a,i}^TC) \coloneqq \max_{b \in \mathcal{B}} h_{a,i}^TCb$. We also overload this notation to simply write $\delta_a \coloneqq S_\mathcal{B}( H_aC)$.

\section{Robust MPC Scheme} \label{sec:proposed}
Now that we have defined the problem in Section \ref{sec:prob} and introduced some mathematical notation in Section \ref{sec:notation}, we move on to discussing our proposed control scheme. The scheme consists of two parts: first, a receding horizon optimization problem is used for computing an optimal nominal trajectory over a finite horizon $N$. Then a control law is defined that drives the real system to track the nominal trajectory. While the proposed scheme is an extension of \cite{MayneLangson2001} and leverages textbook MPC results, we choose to provide a detailed discussion for the sake of clarity and completeness. We begin by defining the nominal system dynamics.

\subsection{Nominal System} \label{subsec:nomsys}
Since the noise terms $w$ and $v$ are unknown disturbances, we define a \textit{nominal}, noise-free, system that will be used for planning. This system is given by
\begin{equation} \label{eq:nomdyn}
\bar{x}_{k+1} = A\bar{x}_k + B\bar{u}_k, \quad \bar{z}_k = H \bar{x}_k,
\end{equation}
where $\bar{x}\in\mathbb{R}^n$ denotes the nominal system state, $\bar{u} \in \mathbb{R}^m$ is the nominal system control, and $\bar{z} \in \mathbb{R}^o$ are the nominal performance variables. This system is initialized at time $k = 0$ by the current state estimate such that $\bar{x}_0 = \hat{x}_0$.

\subsection{Control Law}
The control law that is applied to the real system \eqref{eq:dyn} is then defined by
\begin{equation} \label{eq:controller}
u_k = \bar{u}_k + K(\hat{x}_k - \bar{x}_k),
\end{equation}
where the terms $\bar{x}_k$ and $\bar{u}_k$ are defined by the nominal system trajectory and the current state estimate $\hat{x}_k$ is given by \eqref{eq:estimator}. The feedback gain matrix $K$ is assumed to be chosen such that the matrix $A+BK$ is Schur stable.

\subsection{Online Optimization Problem}
As mentioned earlier, the online finite-horizon optimization problem is based on the nominal system dynamics \eqref{eq:nomdyn}. Since this model is artificial and disturbance free, full state knowledge is available and therefore a simple and efficient MPC problem can be used. Specifically, we choose to formulate the problem as
\begin{equation} \label{eq:robustmpc}
\begin{split}
(\mathbf{\bar{x}^*_k}, \mathbf{\bar{u}^*_k}) = \underset{\mathbf{\bar{x}_k}, \mathbf{\bar{u}_k}}{\text{argmin.}} \:\:& \lVert \bar{x}_{k+N|k}\rVert^2_P + \sum_{j=k}^{k+N-1}\lVert \bar{x}_{j|k}\rVert^2_Q + \lVert \bar{u}_{j|k}\rVert^2_R, \\
\text{subject to} \:\: & \bar{x}_{i+1|k} = A\bar{x}_{i|k} + B\bar{u}_{i|k},  \\
& H\bar{x}_{i|k} \in \bar{\mathcal{Z}}, \:\: \bar{u}_{i|k} \in \bar{\mathcal{U}}, \\
& \bar{x}_{k+N|k} \in \mathcal{X}_N, \quad \bar{x}_{k|k} = \bar{x}_k, \\
\end{split}
\end{equation}
where $i = k, \dots, k+N-1$, the integer $N$ defines the planning horizon, $\bar{x}_k$ is the current nominal system state, and the solution yields the optimal nominal trajectory: $\mathbf{\bar{x}^*_k} \coloneqq [\bar{x}^*_{k|k},\dots,\bar{x}^*_{k+N|k}]$ and $\mathbf{\bar{u}^*_k} \coloneqq [\bar{u}^*_{k|k},\dots,\bar{u}^*_{k+N-1|k}]$. The resulting nominal system control at time $k$ is then defined as $\bar{u}_k = \bar{u}^*_{k|k}$, and the nominal system state at time $k+1$ is then given by \eqref{eq:nomdyn}.

In \eqref{eq:robustmpc} the symmetric, positive definite cost matrices $Q$ and $R$ are chosen to be the same as in \eqref{eq:costfunc}. Additional design variables for the problem include the terminal cost matrix $P$, the terminal set $\mathcal{X}_N$, and the constraint sets $\bar{\mathcal{Z}}$ and $\bar{\mathcal{U}}$. First, in Section \ref{subsec:stability} a procedure is outlined for computing the terminal cost $P$ and terminal set $\mathcal{X}_N$ that will guarantee stability of the nominal system. Then, in Section \ref{subsec:tightconst} we describe how the sets $\bar{\mathcal{Z}}$ and $\bar{\mathcal{U}}$ are defined to ensure robust constraint satisfaction for the real system under the proposed control scheme.

\subsection{Nominal System Stability} \label{subsec:stability}
By the appropriate design of the terminal cost matrix $P$ and terminal set $\mathcal{X}_N$, the online optimization problem \eqref{eq:robustmpc} can ensure closed-loop stability for the nominal system \eqref{eq:nomdyn}. Specifically we choose to use a well-known approach described in \cite{RawlingsMayneEtAl2017} which requires finding a terminal controller $\kappa(\bar{x})$, a terminal cost matrix $P$, and a terminal set $\mathcal{X}_N$ which satisfy the following properties:
\begin{equation} \label{eq:cond1}
\begin{split}
A\bar{x} + B\kappa(\bar{x}) \subseteq \mathcal{X}_N, \quad &\forall \bar{x} \in \mathcal{X}_N, \\
H\mathcal{X}_N \subseteq \bar{\mathcal{Z}}, \quad \kappa(\bar{x}) \subseteq \bar{\mathcal{U}}, \quad &\forall \bar{x} \in \mathcal{X}_N,
\end{split}
\end{equation}
\begin{equation} \label{eq:cond2}
V_N(\bar{x}_{k+1}) + l(\bar{x}_k, \kappa(\bar{x}_k)) \leq V_N(\bar{x}_{k}), \quad \forall \bar{x} \in \mathcal{X}_N,
\end{equation}
where $V_N(\bar{x}) \coloneqq \lVert \bar{x} \rVert^2_P$ and $l(\bar{x}, \bar{u}) \coloneqq \lVert \bar{x} \rVert^2_Q + \lVert \bar{u} \rVert^2_R$.

The first condition \eqref{eq:cond1} is used to guarantee recursive feasibility of the optimization problem by ensuring that there exists an admissible controller that makes the terminal set positively invariant under the nominal dynamics. The second condition provides a sufficient condition to ensure the value function of the optimal control problem is a Lyapunov function, and thus guarantees on convergence of the nominal system to the origin can be obtained.

To ensure conditions \eqref{eq:cond1} and \eqref{eq:cond2} are satisfied, we design $\kappa(\bar{x})$, $P$, and $\mathcal{X}_N$ by considering the unconstrained infinite-horizon LQR problem with cost matrices $Q$ and $R$ for the nominal system dynamics \eqref{eq:nomdyn}. Specifically, we choose $P$ to be the solution to the associated discrete algebraic Riccati equation, and define $\kappa(\bar{x}) = K_f \bar{x}$ where $K_f$ is the associated LQR gain matrix. We then compute the set $\mathcal{X}_N$ which satisfies the conditions \eqref{eq:cond1}:
\begin{equation*}
A_{K_f}\mathcal{X}_N \subseteq \mathcal{X}_N, \quad H\mathcal{X}_N \subseteq \bar{\mathcal{Z}}, \quad K_f\mathcal{X}_N \subseteq \bar{\mathcal{U}},
\end{equation*}
where $A_{K_f} = A + BK_f$. Techniques for computing such as set are described in \cite{BorrelliBemporadEtAl2017} and \cite{GilbertTan1991}. Finally, by choice of $P$ and $K_f$ it can be seen that condition \eqref{eq:cond2} will hold with equality, and $P$ will be symmetric, positive definite. 

From the results in \cite{RawlingsMayneEtAl2017} we therefore can guarantee that the online optimization problem \eqref{eq:robustmpc} will be recursively feasible and the closed loop system defined by the nominal system dynamics \eqref{eq:nomdyn} under the control law $\bar{u}_k = \bar{u}^*_{k|k}$ will exponentially converge to the origin.

\subsection{Robust Constraint Satisfaction} \label{subsec:tightconst}
So far we have defined a control law \eqref{eq:controller} which seeks to drive the real system to track a nominal trajectory of the system \eqref{eq:nomdyn}, and we have defined an optimal control problem that exponentially drives the nominal system to the origin. We now discuss a technique for ensuring robust constraint satisfaction of the real system by computing bounds on the errors that could arise due to disturbances. These error bounds can then be used to tighten the constraint sets \eqref{eq:constraints} to give $\bar{\mathcal{Z}}$ and $\bar{\mathcal{U}}$.

Several types of errors are present in the system: $\hat{e}_k \coloneqq x_k - \hat{x}_k$ is the estimation error and $d_k \coloneqq \hat{x} - \bar{x}_k$ is the control error. The error between the real system state and the nominal state is defined as $e_k \coloneqq x_k - \bar{x}_k = \hat{e}_k + d_k$. Under the control law \eqref{eq:controller}, the coupled dynamics for these errors can be described by the system
\begin{equation} \label{eq:error}
\xi_{k+1}  = A_\xi \xi_k  + \delta_k, \\
\end{equation}
where
\begin{equation*}
A_\xi = \begin{bmatrix}
A-LC & 0 \\ LC & A+BK
\end{bmatrix}, \quad \delta_k = \begin{bmatrix}
I & -L \\ 0 & L 
\end{bmatrix}\begin{bmatrix}
w_k \\ v_k
\end{bmatrix},
\end{equation*}
and with $\xi_k = \begin{bmatrix} \hat{e}_k^T & d_k^T \end{bmatrix}^T$. Note that the matrix $A_\xi$ is Schur stable by the design of the gain matrices $K$ and $L$ and its block triangular structure. Additionally, by Assumption \ref{ass:convexcompact} the vector $\begin{bmatrix} w_k^T & v_k^T \end{bmatrix}^T$ is guaranteed to lie in a compact, convex set that contains the origin in its interior. Therefore it is straightforward to compute a convex, compact set $\Delta$ such that $\delta_k \in \Delta$.

Consider now a convex, compact RPI set $\mathcal{R} \coloneqq \{\xi \: | \: H_r \xi \leq b_r \}$ for the system \eqref{eq:error}. By definition we have the implication $\xi_0 \in \mathcal{R} \implies \xi_k \in \mathcal{R}$ for all $k>0$, which allows $\mathcal{R}$ to define the constant error ``tubes'' that will be used to tighten the constraint sets. Specifically, the tightened constraint sets are defined as
\begin{equation}
\bar{\mathcal{Z}} \coloneqq \mathcal{Z}\ominus \begin{bmatrix}
H & H
\end{bmatrix}\mathcal{R}, \quad \bar{\mathcal{U}} \coloneqq \mathcal{U} \ominus\begin{bmatrix}
0 & K
\end{bmatrix} \mathcal{R}.
\end{equation}

\subsection{Closed-loop System Properties}
We now state two important properties of the controlled system \eqref{eq:dyn} using our proposed scheme. The first property states that the system will satisfy the system constraints robustly, and the second is a result on convergence.
\begin{prop}[Robust Constraint Satisfaction]
Suppose $\xi_0 \in \mathcal{R}$ and that the optimal control problem \eqref{eq:robustmpc} is feasible at time $k=0$. Then, under all admissible disturbance sequences the system will satisfy the constraints $z_k \in \mathcal{Z}$ and $u_k \in \mathcal{U}$ for all $k\geq 0$.
\end{prop}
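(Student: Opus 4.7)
The plan is to combine two standard ingredients: recursive feasibility of the nominal MPC (so the nominal trajectory respects the \emph{tightened} constraints at every step), and robust positive invariance of $\mathcal{R}$ (so the error state $\xi_k$ stays in $\mathcal{R}$ for all $k\geq 0$). Together these imply constraint satisfaction by the Pontryagin construction of $\bar{\mathcal{Z}}$ and $\bar{\mathcal{U}}$.

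First I would establish recursive feasibility of \eqref{eq:robustmpc}. Given feasibility at $k=0$, the usual shifted sequence obtained by appending the terminal control $\kappa(\bar{x}) = K_f \bar{x}$ at the end of the optimal plan is feasible at $k+1$: the dynamics and constraint tightening are built on the deterministic nominal model \eqref{eq:nomdyn}, the shifted states stay in the tightened sets by optimality at $k$, and the newly appended state lies in $\mathcal{X}_N$ while satisfying $H\bar{x}\in\bar{\mathcal{Z}}$, $K_f\bar{x}\in\bar{\mathcal{U}}$ by the invariance and admissibility properties in \eqref{eq:cond1}. By induction, for all $k\geq 0$ the nominal trajectory realized in closed loop satisfies $H\bar{x}_k\in\bar{\mathcal{Z}}$ and $\bar{u}_k\in\bar{\mathcal{U}}$.

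Next I would handle the error side. Since $\xi_0\in\mathcal{R}$ and $\mathcal{R}$ is an RPI set for \eqref{eq:error} with $\delta_k\in\Delta$ for all admissible disturbances, a straightforward induction gives $\xi_k\in\mathcal{R}$ for all $k\geq 0$. Now decompose the real performance variable and control as
\begin{equation*}
z_k = H\bar{x}_k + H(x_k-\bar{x}_k) = H\bar{x}_k + \begin{bmatrix} H & H \end{bmatrix}\xi_k,
\end{equation*}
\begin{equation*}
u_k = \bar{u}_k + K(\hat{x}_k-\bar{x}_k) = \bar{u}_k + \begin{bmatrix} 0 & K \end{bmatrix}\xi_k,
\end{equation*}
using $x_k-\bar{x}_k = \hat{e}_k+d_k$ and $\hat{x}_k-\bar{x}_k = d_k$. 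Applying the definition of the Pontryagin difference to $\bar{\mathcal{Z}} = \mathcal{Z}\ominus[H\ H]\mathcal{R}$ and $\bar{\mathcal{U}} = \mathcal{U}\ominus[0\ K]\mathcal{R}$, together with $\xi_k\in\mathcal{R}$, $H\bar{x}_k\in\bar{\mathcal{Z}}$, $\bar{u}_k\in\bar{\mathcal{U}}$, yields $z_k\in\mathcal{Z}$ and $u_k\in\mathcal{U}$ for all $k\geq 0$.

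The main obstacle is the recursive feasibility step, since the rest is essentially bookkeeping about the Pontryagin difference and RPI invariance. The slightly delicate point is that the MPC is driven by the state \emph{estimate} $\hat{x}_k$ via $\bar{x}_0=\hat{x}_0$, yet the recursion propagates $\bar{x}_{k+1}$ through the purely nominal dynamics \eqref{eq:nomdyn} rather than through a re-initialization from $\hat{x}_{k+1}$; this is precisely what makes the constant cross-section tube sufficient and keeps the terminal-set argument of \cite{RawlingsMayneEtAl2017} directly applicable without needing the more elaborate argument used for initial-state-as-decision-variable schemes.
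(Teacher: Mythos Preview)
Your proposal is correct and follows essentially the same approach as the paper's proof: recursive feasibility of the nominal MPC yields $\bar{z}_k\in\bar{\mathcal{Z}}$ and $\bar{u}_k\in\bar{\mathcal{U}}$, robust positive invariance of $\mathcal{R}$ keeps $\xi_k\in\mathcal{R}$, and the Pontryagin-difference construction of the tightened sets closes the argument. The paper's own proof is extremely terse (it simply invokes ``by design'' for recursive feasibility, deferring to the earlier terminal-ingredient discussion), whereas you spell out the shifted-sequence argument and the explicit decompositions $z_k=H\bar{x}_k+[H\ H]\xi_k$, $u_k=\bar{u}_k+[0\ K]\xi_k$; your remark that $\bar{x}_k$ is propagated through \eqref{eq:nomdyn} rather than re-initialized from $\hat{x}_{k+1}$ is exactly the point that makes the textbook recursive feasibility argument go through unchanged.
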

\begin{proof}
By design, the optimal control problem is recursively feasible and therefore the nominal system is guaranteed to satisfy $\bar{z}_k \in \bar{\mathcal{Z}}$ and $\bar{u}_k \in \bar{\mathcal{U}}$ for all $k\geq0$. Additionally, the assumption that $\xi_0 \in \mathcal{R}$ implies that $\xi_k \in \mathcal{R}$ for all $k>0$ since $\mathcal{R}$ is an RPI set. Finally, by definition of the tightened constraints and with $\xi_k \in \mathcal{R}$ it holds that $\bar{z}_k \in \bar{\mathcal{Z}} \implies z_k \in \mathcal{Z}$ and $\bar{u}_k \in \bar{\mathcal{U}} \implies u_k \in \mathcal{U}$.
\end{proof}

\begin{prop}[Convergence]
Suppose $\xi_0 \in \mathcal{R}$ and the optimal control problem \eqref{eq:robustmpc} is feasible at time $k=0$. Then the system converges exponentially to the set $[I\:\: I]\mathcal{R}$ under all admissible disturbance sequences.
\end{prop}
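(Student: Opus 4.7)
The plan is to decompose the true state as $x_k = \bar{x}_k + e_k$ where $e_k = \hat{e}_k + d_k = [I\;\;I]\xi_k$, and then to show that (i) $e_k$ stays in $[I\;\;I]\mathcal{R}$ for all $k$ because $\xi_k$ stays in $\mathcal{R}$, and (ii) $\bar{x}_k$ decays exponentially to the origin, so the distance from $x_k$ to the set $[I\;\;I]\mathcal{R}$ is bounded above by $\|\bar{x}_k\|$ and thus vanishes exponentially.

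First I would invoke the RPI property. Since $\xi_0 \in \mathcal{R}$ and $\mathcal{R}$ is RPI for the error dynamics \eqref{eq:error} with disturbance set $\Delta$, it follows that $\xi_k \in \mathcal{R}$ for all $k \geq 0$. Applying the linear map $[I\;\;I]$ yields $e_k = x_k - \bar{x}_k \in [I\;\;I]\mathcal{R}$ for all $k \geq 0$. Consequently, writing $x_k = \bar{x}_k + e_k$, the point-to-set distance satisfies
\begin{equation*}
\mathrm{dist}(x_k, [I\;\;I]\mathcal{R}) = \inf_{r \in [I\;I]\mathcal{R}} \|x_k - r\| \leq \|x_k - e_k\| = \|\bar{x}_k\|.
\end{equation*}

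Next I would invoke the nominal stability result established in Section \ref{subsec:stability}. By the choice of $P$, $K_f$, and $\mathcal{X}_N$ satisfying \eqref{eq:cond1}--\eqref{eq:cond2}, and by the initial feasibility assumption on \eqref{eq:robustmpc}, the closed-loop nominal system under $\bar{u}_k = \bar{u}^*_{k|k}$ is recursively feasible and converges exponentially to the origin, i.e.\ there exist constants $c>0$ and $\rho \in (0,1)$ such that $\|\bar{x}_k\| \leq c \rho^k \|\bar{x}_0\|$ for all $k \geq 0$. Combining this bound with the displayed inequality above yields $\mathrm{dist}(x_k, [I\;\;I]\mathcal{R}) \leq c \rho^k \|\bar{x}_0\|$, which is the desired exponential convergence to $[I\;\;I]\mathcal{R}$.

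The argument is essentially a bookkeeping exercise assembled from two ingredients already in hand, so no single step poses a real obstacle. The one subtlety worth flagging is making precise the phrase ``converges exponentially to the set'': one should state it in terms of the point-to-set distance (or an inflation of $[I\;\;I]\mathcal{R}$ by a shrinking ball), and check that the chosen norm used in the distance function is compatible with the exponential bound inherited from the nominal Lyapunov function $V_N$. Beyond that, the proof is an immediate consequence of the RPI property of $\mathcal{R}$ combined with the exponential stability of the nominal closed loop.
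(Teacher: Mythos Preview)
Your proposal is correct and follows essentially the same route as the paper: decompose $x_k = \bar{x}_k + [I\;\;I]\xi_k$, use the RPI property of $\mathcal{R}$ to confine $\xi_k$ (hence $e_k$) for all $k$, and invoke the nominal closed-loop exponential stability from Section~\ref{subsec:stability} to make $\bar{x}_k\to 0$ exponentially. The only difference is that you spell out the point-to-set distance bound and the constants $c,\rho$ explicitly, whereas the paper states the conclusion more tersely via $x_k \in \bar{x}_k \oplus [I\;\;I]\mathcal{R}$.
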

\begin{proof}
By design, the optimal control problem is recursively feasible and drives the nominal system \eqref{eq:nomdyn} to converge exponentially to the origin. Additionally, since $x_k = [I\:\: I]\xi_k + \bar{x}_k$ and since $\xi_0 \in \mathcal{R} \implies \xi_k \in \mathcal{R}$ for all $k\geq0$, it is true that $x_k \in \bar{x}_k \oplus [I\:\:I]\mathcal{R}$ for all $k\geq0$. Finally, since $\bar{x}_k\rightarrow 0$ exponentially we have that $x_k \rightarrow [I\:\:I]\mathcal{R}$ exponentially.
\end{proof}
Note that by choosing $\bar{x}_0 = \hat{x}_0$ (Section \ref{subsec:nomsys}) we have $d_0 = 0$ and therefore the assumption that $\xi_0 \in \mathcal{R}$ is only dependent on the estimator error $\hat{e}_0$. While $\hat{e}_0$ is not known in practice, it is standard to assume it is bounded at $k=0$.

\subsection{Computing $\mathcal{R}$}
It is desirable to compute an RPI set for \eqref{eq:error} that is as small as possible such that the constraint tightening is less conservative and so that the convergence guarantees can be stronger. Obviously this also implies that the computed RPI set should be constraint admissible, such that $E\mathcal{R} \subseteq \Phi$ where 
\begin{equation}
\phi \coloneqq E \xi, \quad E \coloneqq \begin{bmatrix}H & H \\ 0 & K \end{bmatrix},
\end{equation}
and the set $\Phi \coloneqq \{\phi \: | \: H_\phi \phi \leq b_\phi \}$ is defined with
\begin{equation}
H_\phi = \begin{bmatrix}H_z & 0 \\ 0 & H_u \end{bmatrix}, \quad b_\phi = \begin{bmatrix}b_z \\ b_u \end{bmatrix}.
\end{equation}
Note that $\Phi$ is convex, compact, and contains the origin in its interior by Assumption \ref{ass:convexcompact}, and we assume that the pair $(A_\xi, E)$ is observable. As was mentioned in Section \ref{subsec:tightconst} the set $\Delta$ is also convex and compact. For our proposed RPI set computation method in Section \ref{sec:computingrpi}, we also require $\Delta$ to contain the origin in its interior, which is not guaranteed for \eqref{eq:error}. However this can easily be fixed by enlarging $\Delta$ by an arbitrarily small amount as needed. Finally, we also make an assumption (see \cite{KolmanovskyGilbert1998, SchulzeDarupTeichrib2019}) that the minimal RPI set for \eqref{eq:error} satisfies $E\mathcal{R}_\infty \subseteq \Phi^\mathrm{o}$ where $\Phi^\mathrm{o}$ denotes the interior of $\Phi$.

Under the above assumptions, several approaches for computing RPI sets for the system \eqref{eq:error} exist. The approach in \cite{KolmanovskyGilbert1998} is efficient, but will likely result in a large $\mathcal{R}$ which is undesirable as it will lead to overly conservative constraint tightening. The method in \cite{RakovicKerriganEtAl2005} could be employed to obtain a small RPI set, but this would require computationally expensive Minkowski additions. One computationally viable option that could yield a small $\mathcal{R}$ is given by \cite{SchulzeDarupTeichrib2019}, however in this work we choose to use a novel method that is described in Section \ref{sec:computingrpi} which is a combination of the methods presented in \cite{SchulzeDarupTeichrib2019} and \cite{Trodden2016}.


\section{Proposed RPI Set Computation Method} \label{sec:computingrpi}
We now present a novel method for computing RPI sets that is both simple and efficient, and can be used in the synthesis of the control scheme described in Section \ref{sec:proposed} as well as other tube-based MPC schemes. Using the same mathematical notation as in Section \ref{sec:notation}, we consider autonomous systems of the form \eqref{eq:rpidynamics} where the following assumptions are made:
\begin{ass} \label{ass:Delta_Phi}
Both sets $\Delta$ and $\Phi$ are convex, compact, and contain the origin in their interior.
\end{ass}
\begin{ass} \label{ass:stable_obsv}
The matrix $A_e$ is Schur stable and the pair $(A_e, E)$ is observable.
\end{ass}
\begin{ass} \label{ass:mRPI}
The minimal RPI set $\mathcal{R}_\infty$ satisfies $E\mathcal{R}_\infty \subseteq \Phi^\mathrm{o}$ where $\Phi^\mathrm{o}$ denotes the interior of $\Phi$.
\end{ass}
These are standard assumptions that will ensure our approach exhibits the properties described in Section \ref{subsec:algprop}. For example, the disturbance set compactness and system stability assumptions are required for invariant sets to exist, and the remaining assumptions are required to guarantee that we can define an invariant set with a finite number of hyper-planes. For further discussion on these assumptions see \cite{KolmanovskyGilbert1998}. Now, to provide some insight into our technique we will briefly review two previously developed methods.

\subsection{Schulze Darup and Teichrib, \cite{SchulzeDarupTeichrib2019}}
This work combines the advantages of both \cite{KolmanovskyGilbert1998} and \cite{RakovicKerriganEtAl2005} to yield an algorithm that is more efficient than \cite{RakovicKerriganEtAl2005} and can generate RPI sets that are better approximations to the minimal RPI set than \cite{KolmanovskyGilbert1998} (which was designed to compute \textit{maximal} RPI sets).

This is accomplished by first using the techniques presented in \cite{RakovicKerriganEtAl2005} to compute, for the user defined $\epsilon \geq 0$, the value $k^*$ such that $(1+\epsilon) A_e^{k^*}\Delta \subseteq \epsilon \Delta$. Then a container set $\mathcal{C}(\epsilon) \coloneqq \{e \:|\: H_c e \leq b_c\}$ is defined where $H_c \coloneqq H_\phi E$ and $b_c \coloneqq (1+\epsilon)\sum_{j=0}^{k^*-1} S_\Delta\big(H_cEA_e^j \big)$. The RPI set $\mathcal{P}_\infty(\epsilon)$ is then computed as the largest RPI set contained in $\mathcal{C}(\epsilon)$ using the approach in \cite{KolmanovskyGilbert1998} which recursively defines $\mathcal{P}_k \coloneqq \{e \:|\: H_{p,k} e \leq b_{p,k}\}$ by
\begin{equation} \label{eq:dandtrecursion}
\begin{split}
H_{p,k} = \begin{bmatrix}
H_{p,k-1} \\ H_\phi E A_e^k
\end{bmatrix},  \quad b_{p,k} = \begin{bmatrix}
b_{p,k-1} \\ r_k
\end{bmatrix}, \\
r_k = r_{k-1} - S_\Delta\Big(H_\phi EA_e^{k-1}\Big),
\end{split}
\end{equation}
with $H_{p,0} = H_\phi E$ and $b_{p,0} = r_0 = b_c$. From the results in \cite{KolmanovskyGilbert1998} and \cite[Thm~1]{SchulzeDarupTeichrib2019} this recursion will terminate (i.e. $\mathcal{P}_{k+1} = \mathcal{P}_k$) in a \textit{finite} number of iterations $\bar{k}$ under the stated assumptions. The RPI set is then given as $\mathcal{P}_\infty(\epsilon) = \mathcal{P}_{\bar{k}}$.

Not only is this algorithm efficient, but from \cite[Thm~1]{SchulzeDarupTeichrib2019} it is proven that the choice of the container set $\mathcal{C}(\epsilon)$ yields RPI sets comparable to those in \cite{RakovicKerriganEtAl2005} in that the resulting tightened constraints would be identical for both methods. This is advantageous for robust MPC since \cite{RakovicKerriganEtAl2005} can yield RPI sets that are arbitrarily close to the minimal RPI set, which reduces conservativeness.

\subsection{Trodden, \cite{Trodden2016}}
Another approach, described in \cite{Trodden2016}, computes an RPI set defined as $\mathcal{T}(H_t) \coloneqq \{e \:|\: H_t e \leq b_t\}$, where the matrix $H_t$ is determined \textit{a priori} and the vector $b_t$ is determined by solving a linear program. They show \cite[Thm~4]{Trodden2016} that for a specific $H_t$, if an RPI set exists, then $\mathcal{T}(H_t)$ is the smallest RPI set with the chosen $H_t$. Thus, while the approach is simple and efficient, it requires careful consideration of the chosen $H_t$ such that an RPI set \textit{exists} and so that it is not too conservative. Unfortunately, no insightful guidelines for choosing $H_t$ are provided in \cite{Trodden2016}.

\subsection{Proposed Method}
In this work we propose to combine the mutually beneficial ideas from both \cite{Trodden2016} and \cite{SchulzeDarupTeichrib2019}. Specifically, we use insights from \cite{SchulzeDarupTeichrib2019} to identify a good set of hyper-planes that will define $H_r$, and use \cite{Trodden2016} to find the smallest RPI set associated with that choice. To accomplish this we propose Algorithm \ref{alg}
\begin{algorithm}
\caption{Compute $\mathcal{R}(k)$} \label{alg}
\begin{algorithmic}[1]
\Procedure{ComputeRPISet}{$k$}
\State $H_{r,0} \leftarrow H_\phi E$
\For{$i \in [1,\dots,k]$}
\State $H_{r,i} \leftarrow \begin{bmatrix}
H_{r,i-1} \\ H_\phi E A_e^k
\end{bmatrix}$
\EndFor
\State Solve linear program \eqref{eq:rpiLP} with $H_r = H_{r,k}$
\If{\eqref{eq:rpiLP} has unbounded objective}
\State \textbf{return} Failure
\Else
\State $b_{r,k} = c^* + d^*$
\State $\mathcal{R}(k) \leftarrow \{e \:|\: H_{r,k} e \leq b_{r,k} \}$
\State \textbf{return} $\mathcal{R}(k)$
\EndIf
\EndProcedure
\end{algorithmic}
\end{algorithm}
\vspace{-.10in}
where the linear program is defined by \eqref{eq:rpiLP}
\begin{equation} \label{eq:rpiLP}
\begin{split}
(c^*, d^*) = \underset{c, d, \xi_i, \omega_i}{\text{arg max.}} \:\:& \sum_{j=1}^{n_r}c_j + d_j, \\
\text{subject to} \:\: & c_i \leq h^T_{r,i} A_e \xi_i,  \\
& H_r\xi_i \leq c + d, \\
& d_i \leq h^T_{r,i} \omega_i, \\
& H_\delta \omega_i \leq b_\delta, \\
\end{split}
\end{equation}
where $i = 1, \dots, n_r$ and $h^T_{r,i}$ is the $i^\text{th}$ row of $H_r$.

As can be seen this algorithm is simple and efficient, as the RPI set $\mathcal{R}(k)$ only requires computation of a single linear program. As was previously mentioned, this method is mainly hindered by the assumption that a good choice for $H_r$ is known \textit{a priori} and that an RPI set exists for that choice. The definition of $H_r$ in Algorithm \ref{alg} along with insights from \cite{SchulzeDarupTeichrib2019} bridge this assumption. If for a chosen $k$, Algorithm \ref{alg} is not successful, then no RPI set exists for $H_r$ but the practitioner can simply increase the value of $k$ until a valid solution is found. We now discuss several useful properties of the approach.

\subsection{Algorithm Properties} \label{subsec:algprop}
The first important property of Algorithm \ref{alg} states that there exists a \textit{finite} $k$ such that the algorithm will return a valid RPI set.
\begin{thm} \label{thm:existence}
Suppose Assumptions \ref{ass:Delta_Phi}, \ref{ass:stable_obsv}, and \ref{ass:mRPI} hold. Then, there exists a finite integer $\underline{k}$ such that Algorithm \ref{alg} will return a valid RPI set for all $k\geq\underline{k}$.
\end{thm}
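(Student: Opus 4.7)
The plan is to couple Algorithm \ref{alg} to the Schulze Darup--Teichrib procedure \eqref{eq:dandtrecursion} and then invoke Trodden's LP characterization of the smallest RPI polytope with prescribed face normals.

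First I would fix an arbitrary $\epsilon > 0$ and run the SD\&T recursion. Under Assumptions \ref{ass:Delta_Phi}--\ref{ass:mRPI}, \cite[Thm~1]{SchulzeDarupTeichrib2019} guarantees that the recursion terminates at some finite index $\bar{k}$, producing a polytope $\mathcal{P}_\infty(\epsilon)$ which is an RPI set for \eqref{eq:rpidynamics} and whose face-normal matrix is exactly $H_{p,\bar{k}}$. The key combinatorial observation is that $H_{p,\bar{k}}$ is identical to the matrix $H_{r,\bar{k}}$ built by the \texttt{for}-loop of Algorithm \ref{alg}: both stack the blocks $H_\phi E A_e^{j}$ for $j=0,\dots,\bar{k}$.

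Second, I would invoke \cite[Thm~4]{Trodden2016}: for any fixed row matrix $H_r$, the linear program \eqref{eq:rpiLP} has a bounded optimum and returns the smallest RPI set with face normals $H_r$, \emph{provided} at least one such RPI set exists. Since $\mathcal{P}_\infty(\epsilon)$ is an RPI set whose face normals coincide with $H_{r,\bar{k}}$, this hypothesis is satisfied at $k=\bar{k}$; Algorithm \ref{alg} therefore terminates successfully and returns a valid $\mathcal{R}(\bar{k})\subseteq \mathcal{P}_\infty(\epsilon)$.

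It remains to extend the argument to every $k \geq \bar{k}$. For such $k$, the matrix $H_{r,k}$ augments $H_{r,\bar{k}}$ with additional rows $H_\phi E A_e^{j}$ for $j=\bar{k}+1,\dots,k$. To show that an RPI set with these augmented normals still exists, I would reuse $\mathcal{P}_\infty(\epsilon)$ itself: for each extra row $h_{r,i}^T$, set $b_{r,i}\coloneqq S_{\mathcal{P}_\infty(\epsilon)}(h_{r,i}^T)$, which is finite by compactness of $\mathcal{P}_\infty(\epsilon)$. Pairing the original tight half-spaces with these support-function bounds yields a (possibly redundant) H-representation of the already-RPI set $\mathcal{P}_\infty(\epsilon)$ using exactly the rows of $H_{r,k}$. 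This is the required certificate, so \cite[Thm~4]{Trodden2016} applies again and Algorithm \ref{alg} succeeds at every $k\geq\bar{k}$. Taking $\underline{k} = \bar{k}$ completes the argument.

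The main obstacle is the case $k > \bar{k}$: adding more rows to $H_r$ has no \emph{a priori} reason to preserve LP boundedness, and one might worry that the stacked constraints eventually leave no RPI polytope of that face structure. The witness-set construction above side-steps this by exhibiting an explicit, possibly redundant, H-representation of an RPI set rather than tracking how the LP's feasible region evolves as rows are appended. A subtlety worth verifying along the way is that \cite[Thm~1]{SchulzeDarupTeichrib2019} delivers $\mathcal{P}_\infty(\epsilon)$ satisfying the RPI inclusion $A_e \mathcal{P}_\infty(\epsilon)\oplus\Delta \subseteq \mathcal{P}_\infty(\epsilon)$ in its exact (non-scaled) form, which is precisely the content of that theorem.
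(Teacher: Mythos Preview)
Your approach mirrors the paper's almost exactly: both couple Algorithm~\ref{alg} to the Schulze Darup--Teichrib recursion to exhibit a witness RPI polytope with face normals $H_{r,k}$, then invoke \cite[Thm~4]{Trodden2016} to conclude that \eqref{eq:rpiLP} is bounded. For $k>\bar{k}$ the paper takes a slightly shorter path than your support-function construction: it simply quotes the Kolmanovsky--Gilbert fact that the recursion \eqref{eq:dandtrecursion} stabilizes, i.e.\ $\mathcal{P}_{k+1}=\mathcal{P}_k$ for all $k\ge\bar{k}$, so $\mathcal{P}_k$ itself already carries the enlarged face-normal matrix $H_{r,k}$ and remains RPI. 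Your explicit redundant H-representation of $\mathcal{P}_\infty(\epsilon)$ achieves the same thing and is equally valid.

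There is, however, one genuine omission. Your statement of \cite[Thm~4]{Trodden2016} is incomplete: boundedness of the LP \eqref{eq:rpiLP} requires not only that \emph{some} RPI set with the prescribed normals exist, but also that such a set contain the origin in its interior. The paper verifies this second hypothesis separately, arguing that since $\Delta$ contains the origin in its interior (Assumption~\ref{ass:Delta_Phi}), the minimal RPI set $\mathcal{R}_\infty$ does as well by \cite[Thm~4.1]{KolmanovskyGilbert1998}, and hence so does every RPI set, being a superset of $\mathcal{R}_\infty$ by \cite[Cor~4.2]{KolmanovskyGilbert1998}. You should insert this short step; without it your appeal to Trodden's theorem is not fully justified.
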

\begin{proof}
By \cite[Thm~4]{Trodden2016}, problem \eqref{eq:rpiLP} admits a bounded optimal solution if in addition to Assumptions \ref{ass:Delta_Phi} and \ref{ass:stable_obsv} it also holds for the chosen $H_r$ that: (i) an RPI set exists and (ii) the RPI set contains the origin in its interior. We first prove that an RPI set with $H_r = H_{r,k}$ exists for all $k \geq \underline{k}$ for some finite integer $\underline{k}$.

Using the results from \cite{SchulzeDarupTeichrib2019} (which leverage \cite[Thm~6.3]{KolmanovskyGilbert1998}) along with Assumptions \ref{ass:Delta_Phi}, \ref{ass:stable_obsv}, and \ref{ass:mRPI}, there is guaranteed to exist a \textit{finite} value $\bar{k}$ such that the set $\mathcal{P}_{\bar{k}}$ defined by \eqref{eq:dandtrecursion} is an RPI set and $\mathcal{P}_{k+1} = \mathcal{P}_{k}$ for all $k\geq \bar{k}$. Since the sets $\mathcal{P}_{k}$ defined in \eqref{eq:dandtrecursion} and $\mathcal{R}(k)$ defined by Algorithm \ref{alg} use the same hyper-planes (i.e. $H_{p,k} = H_{r,k}$) for all $k \geq 0$), it is apparent that for all $k\geq \bar{k}$ an RPI set exists for the choice of $H_{r,k}$. 

We now prove that the RPI sets with $H_r = H_{r,k}$ contain the origin in their interior for all $k\geq \bar{k}$. By the theorem assumptions the set $\Delta$ contains the origin in its interior and therefore by \cite[Thm~4.1]{KolmanovskyGilbert1998} the minimal RPI set $\mathcal{R}_\infty$ contains the origin in its interior as well. Thus, since $\mathcal{R}(k)$ is a valid RPI set for all $k\geq \bar{k}$ and since $\mathcal{R}_\infty \subseteq \mathcal{R}(k)$ by \cite[Cor~4.2]{KolmanovskyGilbert1998} we have the desired result.
\end{proof}

Next, we prove that the RPI sets generated by Algorithm \ref{alg} are non-increasing in size as $k$ increases.
\begin{thm} \label{thm:decrease}
Suppose Assumptions \ref{ass:Delta_Phi}, \ref{ass:stable_obsv}, and \ref{ass:mRPI} hold. Then, for all $k \geq \max \{\underline{k}, n_e-1 \}$, the sets $\mathcal{R}(k)$ and $\mathcal{R}(k+1)$ computed by Algorithm \ref{alg} will satisfy $\mathcal{R}(k+1) \subseteq \mathcal{R}(k)$.
\end{thm}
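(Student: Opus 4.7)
The plan is to leverage Trodden's minimality property (\cite[Thm~4]{Trodden2016}): for a fixed hyperplane matrix $H_t$, the LP \eqref{eq:rpiLP} produces the smallest-by-inclusion RPI set of the form $\{e \: | \: H_t e \leq b\}$ whenever such an RPI set exists. The strategy is to exhibit an RPI set of the form $\{e \: | \: H_{r,k+1} e \leq \tilde{b}\}$ that coincides with $\mathcal{R}(k)$; minimality of $\mathcal{R}(k+1)$ among RPI sets with hyperplane matrix $H_{r,k+1}$ will then immediately force $\mathcal{R}(k+1) \subseteq \mathcal{R}(k)$.

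First, I would observe from the construction in Algorithm \ref{alg} that $H_{r,k+1}$ is obtained by appending the block of rows $H_\phi E A_e^{k+1}$ below $H_{r,k}$. The hypothesis $k \geq \underline{k}$ together with Theorem \ref{thm:existence} guarantees that both $\mathcal{R}(k)$ and $\mathcal{R}(k+1)$ are valid RPI sets. The additional condition $k \geq n_e - 1$, combined with observability of $(A_e, E)$, ensures that $H_{r,k}$ has full column rank so that $\mathcal{R}(k)$ is bounded. This lets me define, for each new row $h_{r,k+1,i}^T$ of $H_{r,k+1}$ (i.e.\ those with index $i$ exceeding the row count of $H_{r,k}$), the finite scalar $\tilde{b}_i := \max_{e \in \mathcal{R}(k)} h_{r,k+1,i}^T e$. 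I then stack $b_{r,k}$ on top of these $\tilde{b}_i$ to form $\tilde{b}$, and set $\tilde{\mathcal{R}} := \{e \: | \: H_{r,k+1} e \leq \tilde{b}\}$.

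By construction, the first block of constraints defining $\tilde{\mathcal{R}}$ recovers $\mathcal{R}(k)$ exactly, while the remaining constraints are redundant by the choice of each $\tilde{b}_i$; hence $\tilde{\mathcal{R}} = \mathcal{R}(k)$, which is already known to be RPI. Applying Trodden's minimality to the hyperplane matrix $H_{r,k+1}$ then yields $\mathcal{R}(k+1) \subseteq \tilde{\mathcal{R}} = \mathcal{R}(k)$, which is the desired containment. I expect the only real subtlety to be justifying that the new rows of $H_{r,k+1}$ can be rendered redundant with respect to $\mathcal{R}(k)$, which is precisely where the boundedness of $\mathcal{R}(k)$ — and thus the hypothesis $k \geq n_e - 1$ — enters, as it ensures each $\tilde{b}_i$ is finite and the construction goes through cleanly.
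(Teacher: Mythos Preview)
Your proposal is correct and matches the paper's proof essentially step for step: construct a candidate set with hyperplane matrix $H_{r,k+1}$ that coincides with $\mathcal{R}(k)$ by padding $b_{r,k}$ with the suprema of the new rows over $\mathcal{R}(k)$, then invoke Trodden's minimality to conclude $\mathcal{R}(k+1) \subseteq \mathcal{R}(k)$. The only place to tighten is the boundedness argument---full column rank of $H_{r,k}$ alone does not force $\{e : H_{r,k} e \leq b_{r,k}\}$ to be bounded; the paper additionally uses compactness of $\Phi$ (so that $H_\phi \phi \leq 0 \Rightarrow \phi = 0$, whence the recession cone of $\mathcal{R}(k)$ is trivial via the factorization $H_{r,k} = \mathbf{H}_\phi O_k$)---but this is precisely the step you already flagged as the subtlety.
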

\begin{proof}
First, by Theorem \ref{thm:existence}, Algorithm \ref{alg} is guaranteed to return RPI sets $\mathcal{R}(k+1)$ and $\mathcal{R}(k)$ for all $k \geq \max \{\underline{k}, n_e-1 \}$, which also implies that the vectors $b_{r,k}$ are finite. We use this fact to first show that $\mathcal{R}(k)$ is a compact set.

The RPI set $\mathcal{R}(k)$ is defined by hyper-planes which are given as the rows of $H_{r,k}$. Additionally, the matrix $H_{r,k}$ can be written as
\begin{equation*}
H_{r,k} =  \pmb{H}_\phi O_k, \:\: \pmb{H}_\phi = \begin{bmatrix}
    H_\phi & & \\
    & \ddots & \\
    & & H_\phi
  \end{bmatrix}, \:\: O_k \coloneqq \begin{bmatrix}
E \\ \vdots \\ EA_e^k
\end{bmatrix}.
\end{equation*}
Since it is assumed that $\Phi$ is compact, a set given by $H_\phi \phi \leq b$ is also compact for any finite $b\geq0$, which then implies that the set $\{\pmb{\phi} \:|\: \pmb{H}_\phi \pmb{\phi} \leq b_{r,k} \}$ is also compact and thus $\pmb{\phi}$ is bounded. Now, by the theorem assumptions the pair $(A_e, E)$ is observable and $k \geq n_e-1$ such that the matrix $O_k$ has rank $n_e$, which then implies that $e$ must be bounded since $\pmb{\phi} = O_k e$. Therefore $\mathcal{R}(k)$ is compact, which is now used to prove the main result.

For the RPI set $\mathcal{R}(k+1)$ the hyper-planes are defined by $H_{r,k+1}$ which can be written as
\begin{equation*}
H_{r,k+1} = \begin{bmatrix}
H_{r,k} \\ H_\phi E A_e^{k+1}
\end{bmatrix},
\end{equation*}
where it is apparent that the hyper-planes from $\mathcal{R}(k)$ are also included. Now consider the candidate RPI set $\mathcal{R}^\dagger \coloneqq \{e \:|\: H_{r,k+1} e \leq b^\dagger \}$ where $b^\dagger$ is defined as
\begin{equation*}
b^\dagger \coloneqq \begin{bmatrix}
b_{r,k} \\ \sup_{e \in \mathcal{R}(k)} H_\phi E A_e^{k+1}e
\end{bmatrix}.
\end{equation*}
With this choice, the vector $b^\dagger$ is finite since $b_{r,k}$ is finite and $\mathcal{R}(k)$ is compact, and furthermore $\mathcal{R}^\dagger = \mathcal{R}(k)$. Thus $\mathcal{R}^\dagger$ is a valid RPI set for \eqref{eq:rpidynamics} and has the same hyper-planes as $\mathcal{R}(k+1)$. By construction of the linear program \eqref{eq:rpiLP} it is then guaranteed that $b_{r,k+1} \leq b^\dagger$, which implies that $\mathcal{R}(k+1) \subseteq \mathcal{R}^\dagger = \mathcal{R}(k)$.
\end{proof}

An additional useful insight into the performance of Algorithm \ref{alg} can also be made by noting the relationship between Algorithm \ref{alg} and the method in \cite{SchulzeDarupTeichrib2019}. Consider the computation of the set $\mathcal{P}_\infty(\epsilon)$ for some $\epsilon$, where $\bar{k}(\epsilon)$ iterations of the recursion \eqref{eq:dandtrecursion} were required. Then the RPI set $\mathcal{R}(\bar{k}(\epsilon))$ is guaranteed to exist and it is guaranteed that $\mathcal{R}(\bar{k}(\epsilon)) \subseteq \mathcal{P}_\infty(\epsilon)$ since the hyper-planes defining $\mathcal{P}_\infty(\epsilon)$ are a subset of the hyper-planes defining $\mathcal{R}(\bar{k}(\epsilon))$.
Based on the results in \cite[Thm~1]{SchulzeDarupTeichrib2019} we can therefore conclude that increasing $k$ in Algorithm \ref{alg} will also lead to RPI sets that are comparable to arbitrarily close approximations to the minimal RPI set, which is a desirable property to minimize the conservativeness of the robust MPC constraint tightening.

\subsection{Comparison of Approaches} \label{subsec:compare}
Now that we have presented Algorithm \ref{alg} and identified some of its useful properties we will present some comparative results using the same problem as described in Example 1 from \cite{SchulzeDarupTeichrib2019} by conducting the following experiment. First we define a value of $\epsilon$, implement the method in \cite{SchulzeDarupTeichrib2019} to compute the RPI set $\mathcal{P}_\infty(\epsilon)$, and save the value $\bar{k}(\epsilon)$ that corresponds to the number of iterations of the recursion \eqref{eq:dandtrecursion}. Next we use Algorithm \ref{alg} and the value of $\bar{k}(\epsilon)$ to compute the RPI set $\mathcal{R}(\bar{k}(\epsilon))$. Finally, for completeness we also compare against an RPI set computed using \cite{Trodden2016} where the hyper-planes $H_t$ are defined as the sides of an $r$-sided regular polygon (as is used in \cite[Section IV-A]{Trodden2016}). The value of $r$ is chosen to be the number of hyper-planes used to define $\mathcal{R}(\bar{k}(\epsilon))$ and so we denote this RPI set as $\mathcal{T}(\bar{k}(\epsilon))$.

This comparison was repeated for three different values of $\epsilon$ and the results are shown in Figure \ref{fig:comparerpi} and Table \ref{tab:comparerpi}. As is expected (and which is true for all $\epsilon$), we see in Figure \ref{fig:comparerpi} that the RPI set $\mathcal{R}(\bar{k}(\epsilon)) \subseteq \mathcal{P}_\infty(\epsilon)$, which demonstrates an advantage that Algorithm \ref{alg} has over the approach in \cite{SchulzeDarupTeichrib2019}. Interestingly we see that the set $\mathcal{T}(\bar{k}(\epsilon))$ also provides good results in this case. However it is important to note that the approach used to define the hyper-planes for $\mathcal{T}(\bar{k}(\epsilon))$ does not scale well with problem dimension and also there are no guarantees that an RPI set would be found. In Table \ref{tab:comparerpi} we also present results on how much the constraint in the directions of $x_1$, $x_2$, and $u$ would be tightened when compared against the set $\mathcal{R}(\bar{k}(\epsilon))$. In other words, a positive value would mean the tightened constraints are more conservative.
\begin{figure*}
    \centering
    \begin{subfigure}{.32\textwidth}
        \includegraphics[height=.8\textwidth]{./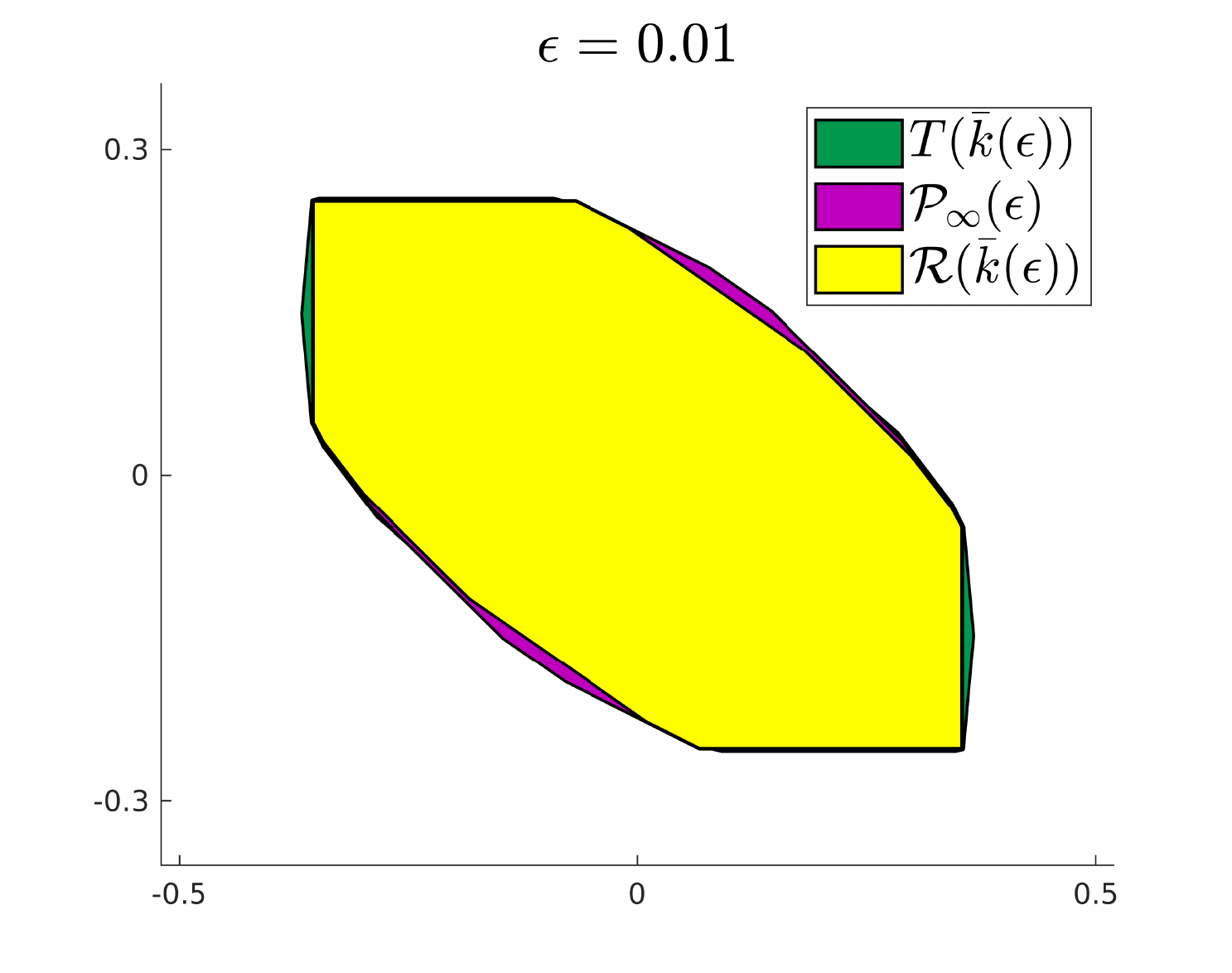}
    \end{subfigure}
    \begin{subfigure}{.32\textwidth}
        \includegraphics[height=.8\textwidth]{./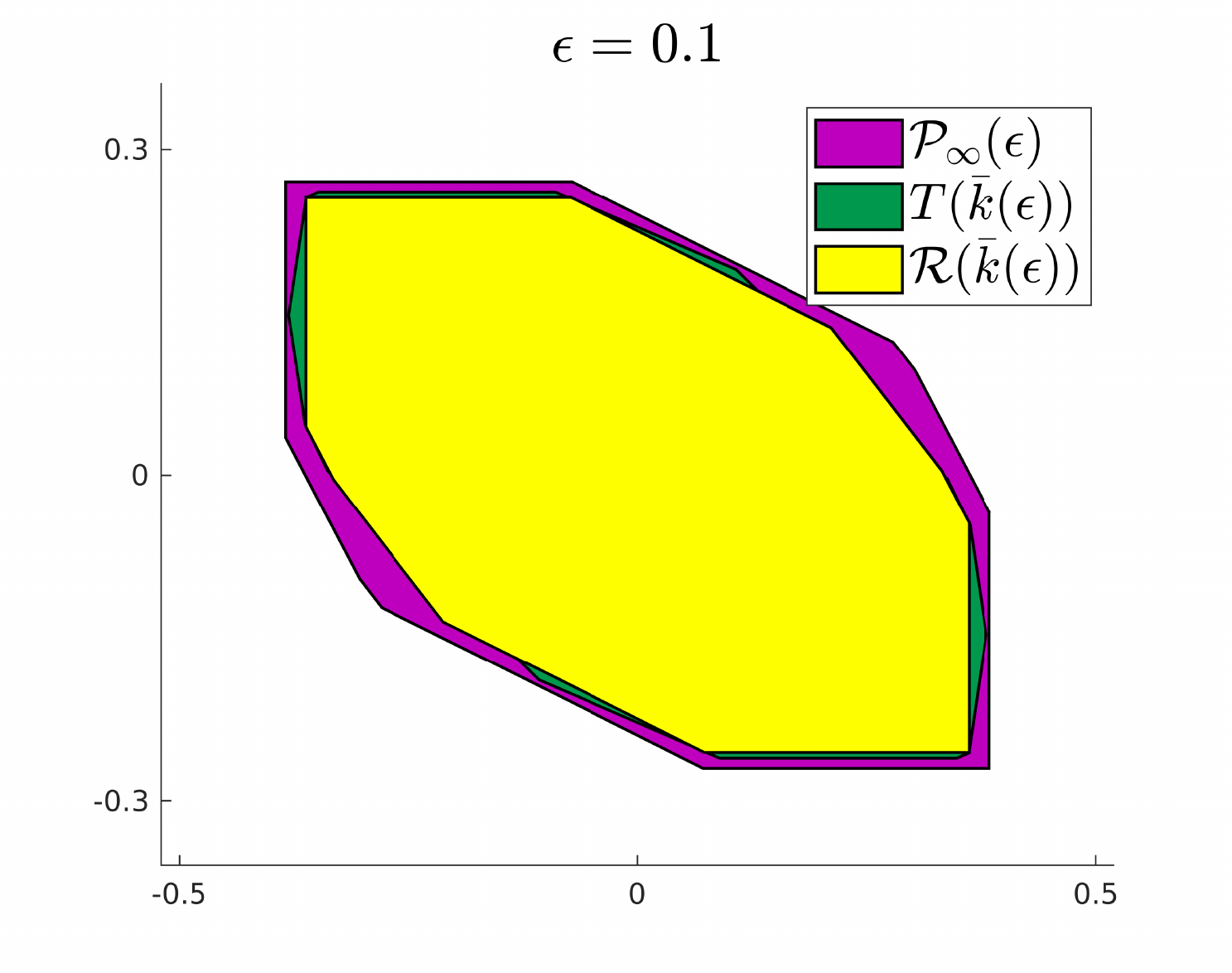}
    \end{subfigure}
    \begin{subfigure}{.32\textwidth}
        \includegraphics[height=.8\textwidth]{./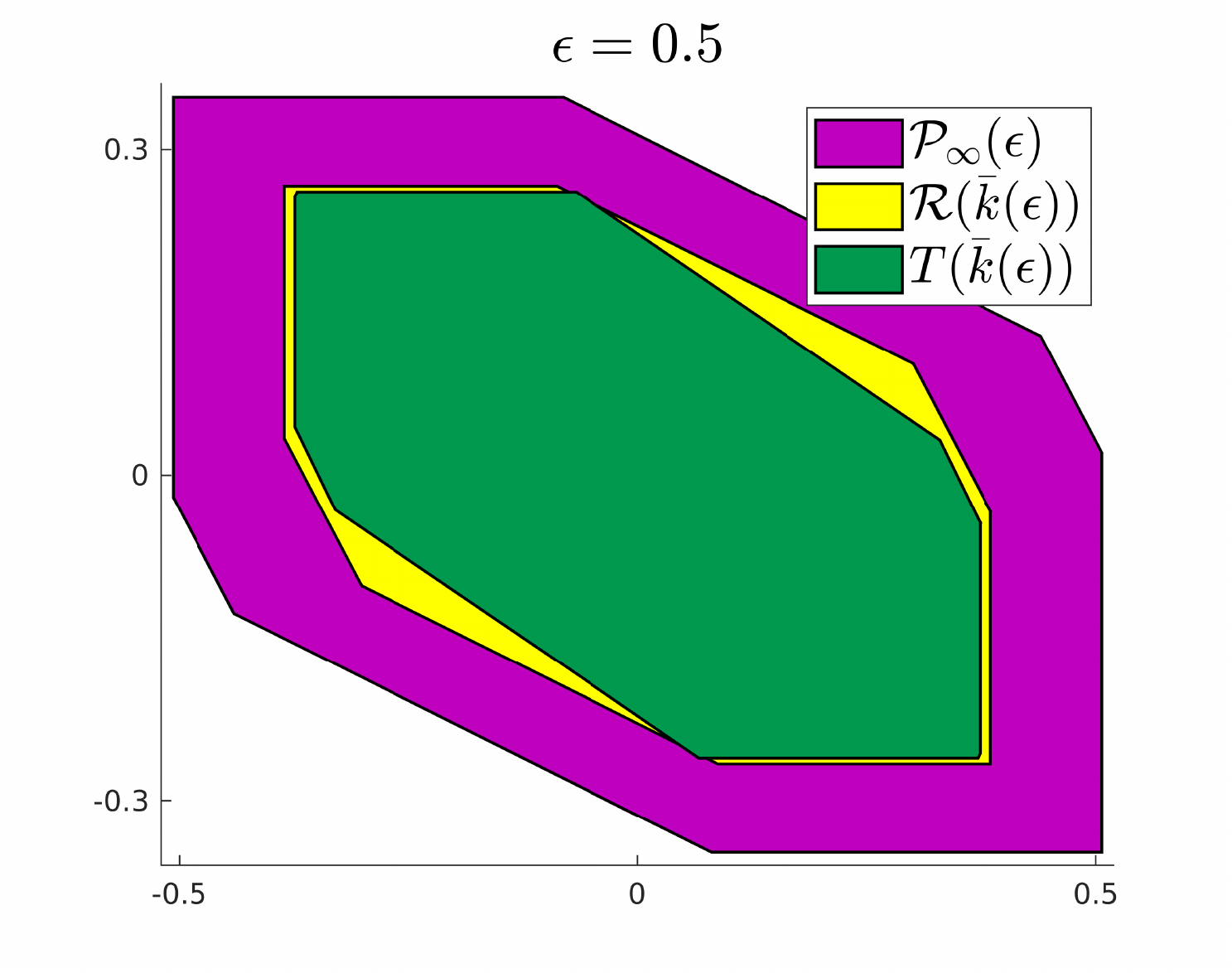}
    \end{subfigure}
    \caption{RPI sets computed for the comparison discussed in Section \ref{subsec:compare} for $\epsilon \in [0.01, 0.1, 0.5]$. The sets in yellow are computed using Algorithm \ref{alg}, the sets in purple are computed using the method in \cite{SchulzeDarupTeichrib2019}, and the sets in green are computed using \cite{Trodden2016} with hyper-planes defined by the sides of a regular polygon.}
\label{fig:comparerpi}
\vspace{-.20in}
\end{figure*}

\begin{table}[htbp]
\centering
\setlength{\tabcolsep}{4pt}
\begin{tabular}{l|c|c|c|c|c|c|c|c|c|} \cline{2-10} & \multicolumn{3}{c|}{$\epsilon = 0.01$}   & \multicolumn{3}{c|}{$\epsilon = 0.1$}   & \multicolumn{3}{c|}{$\epsilon = 0.5$}  \\ 
\cline{2-10}
\multicolumn{1}{l|}{}                                   & \multicolumn{1}{c|}{$\delta x_1$}   & \multicolumn{1}{c|}{$\delta x_2$}   & \multicolumn{1}{c|}{$\delta u$}      & $\delta x_1$   & $\delta x_2$   & $\delta u$      & $\delta x_1$   & $\delta x_2$   & $\delta u$  \\ \cline{1-10}
\multicolumn{1}{|r|}{{\color{mypurple} $\mathcal{P}_\infty(\epsilon)$}, $(\%)$} & \multicolumn{1}{c|}{{\color{mypurple} 0.3}} & \multicolumn{1}{c|}{{\color{mypurple} 0.3}} & \multicolumn{1}{c|}{{\color{mypurple} 0.3}} & {\color{mypurple} 6.1} & {\color{mypurple} 5.7} & {\color{mypurple} 6.7} & {\color{mypurple} 31.5} & {\color{mypurple} 30.6} & {\color{mypurple} 36.8} \\ \cline{1-10}
\multicolumn{1}{|r|}{{\color{mygreen} $\mathcal{T}(\bar{k}(\epsilon))$}, $(\%)$} & \multicolumn{1}{c|}{{\color{mygreen} 3.7}} & \multicolumn{1}{c|}{{\color{mygreen} 1.1}} & \multicolumn{1}{c|}{{\color{mygreen} 0.2}} & {\color{mygreen} 5.2} & {\color{mygreen} 2.0} & {\color{mygreen} 4.3} & {\color{mygreen} -3.0} & {\color{mygreen} -2.0} & {\color{mygreen} 1.5} \\ \cline{1-10}
\end{tabular}
\caption{Results on constraint tightening based on the RPI sets computed for the comparison discussed in Section \ref{subsec:compare}. The reported values represent the \textit{percentage} change in the amount a constraint is tightened when $\mathcal{P}_\infty(\epsilon)$ or $\mathcal{T}(\bar{k}(\epsilon))$ is used with respect to when $\mathcal{R}(\bar{k}(\epsilon))$ is used. A positive number means the resulting constraint is more conservative.}
\label{tab:comparerpi}
\vspace{-.20in}
\end{table}

\section{Examples} \label{sec:examples}
In this section we demonstrate the combined use of the robust MPC scheme developed in Section \ref{sec:proposed} and the proposed RPI set computation method described in Section \ref{sec:computingrpi}. We present results for two example systems: one is a synthetic system with state dimension $n=2$ and the second is a wind energy conversion system with $n=10$. In the implementation of both examples we leverage the open source Matlab toolboxes MPT3 \cite{HercegKvasnicaEtAl2013} and YALMIP \cite{Loefberg2004}, and solve all optimization problems using IBM ILOG CPLEX.

\subsection{Synthetic System} \label{subsec:synthetic}
For our first example we use the system and problem definition given in \cite{MayneRakovicEtAl2006}, and choose the matrix $K$ to be the LQR gain matrix with $Q = I$, $R = 0.01$, and $L$ is chosen to be the LQR gain matrix with $R = I$. Additionally, all simulated disturbances are computed uniformly at random within their defined bounds. We synthesize the robust MPC controller using our proposed RPI set computation method and choose a horizon of $N=13$. Some simulation results are shown in Figure \ref{fig:synthz}.
\begin{figure}[ht]
    \centering
    \includegraphics[width=0.8\columnwidth]{./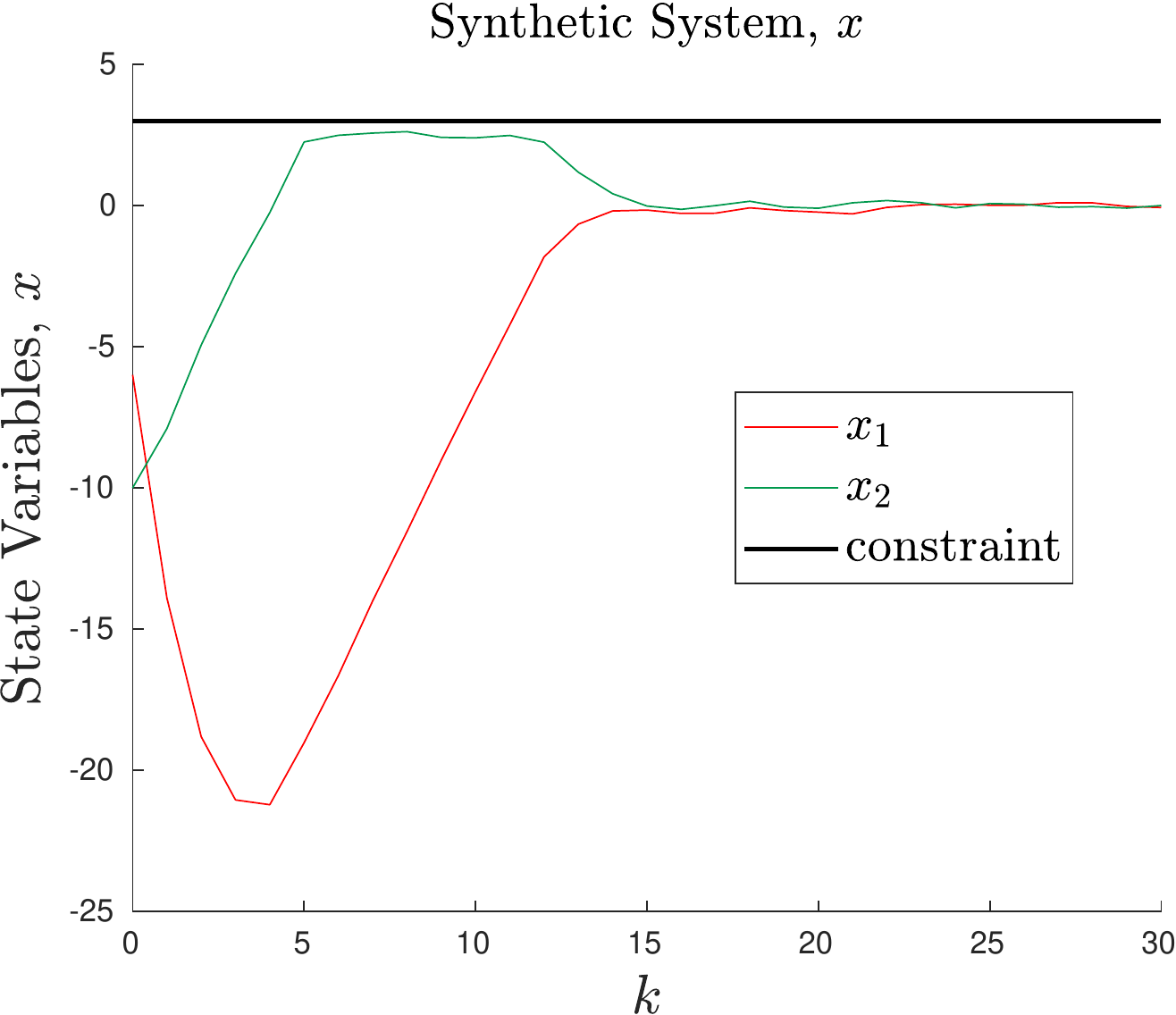}
    \caption{Simulation of the synthetic system described in Section \ref{subsec:synthetic}. It can be seen that the system satisfies the given constraints.}
    \label{fig:synthz}
\end{figure}

For comparison we also implement the control scheme in \cite{MayneRakovicEtAl2006}, and compute the required RPI sets using the approach in \cite{RakovicKerriganEtAl2005} with $\epsilon = 0.01$. First, in Table \ref{tab:synth_tight} we present a comparison of how much each constraint was tightened. The increase in tightening that is seen in \cite{MayneRakovicEtAl2006} can primarily be attributed to the fact that the RPI sets are computed \textit{sequentially}, instead of simultaneously as in our proposed approach.
\begin{table}
\centering
\begin{tabular}{l|c|c|c|}
\cline{2-4}
 & $x_1$ & $x_2$ & $u$ \\ \hline
\multicolumn{1}{|r|}{{\color{black} Mayne et al., $(\%)$}} & {\color{black} 33.2} & {\color{black} 81.8} & {\color{black} 59.3}  \\ \hline
\end{tabular}
\caption{Comparing the constraint tightening using our proposed approach and the robust MPC scheme from \cite{MayneRakovicEtAl2006}. The reported values are \textit{percentage} increases in how much each constraint is tightened with respect to our proposed control scheme.}
\label{tab:synth_tight}
\end{table}

\begin{figure}[ht]
    \centering
    \includegraphics[width=0.8\columnwidth]{./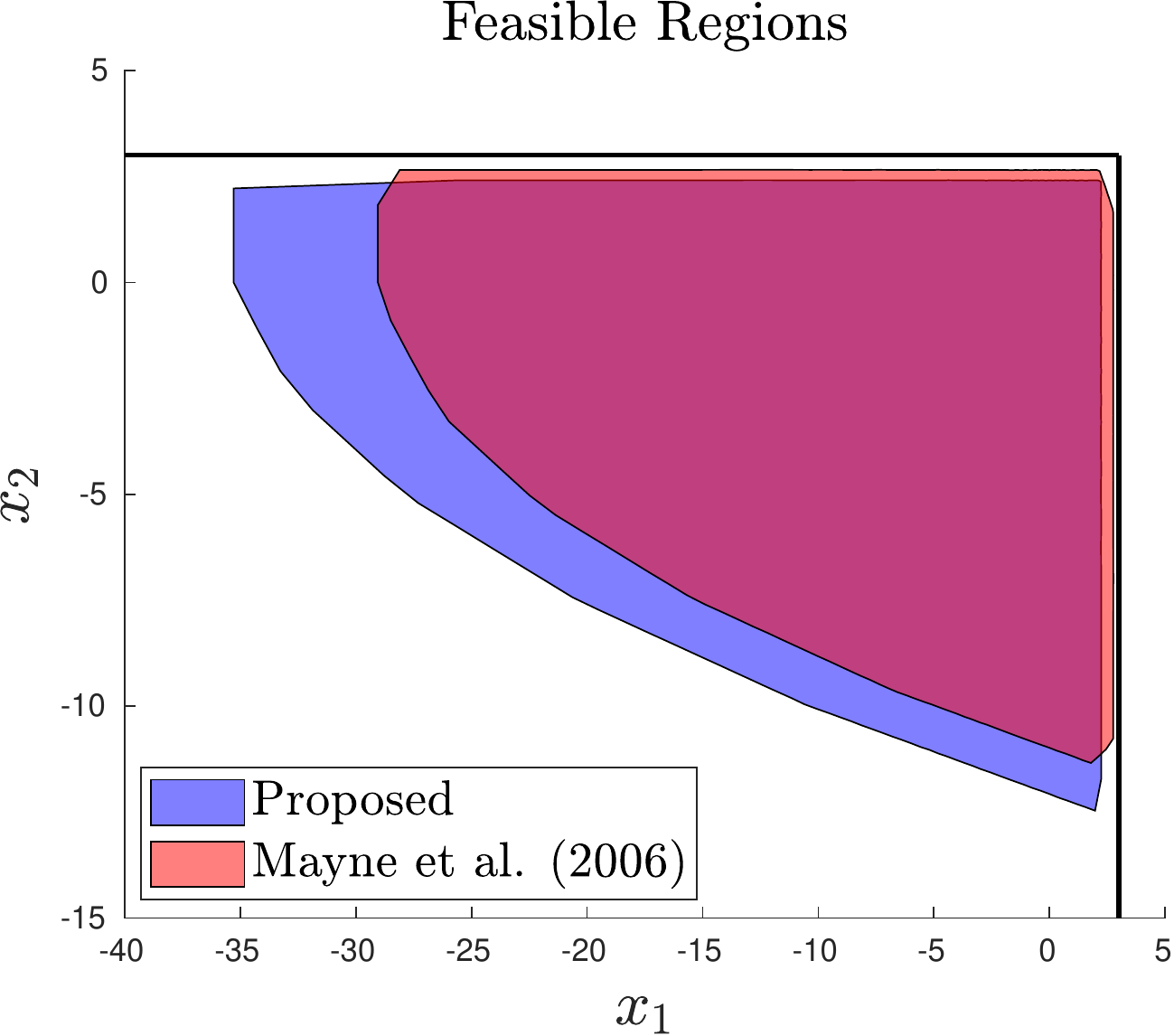}
    \caption{Feasible regions for the example discussed in Section \ref{subsec:synthetic}, using both our proposed control scheme and the scheme presented in \cite{MayneRakovicEtAl2006}.}
    \label{fig:feasible_region}
\end{figure}
Second, in Figure \ref{fig:feasible_region} we show a comparison of the approximate feasible regions for the two control schemes when both approaches were initialized with $x_0 = \hat{x}_0$. As can be seen the feasible region for \cite{MayneRakovicEtAl2006} is larger closer the constraint boundaries $x_1 \leq 3$ and $x_2 \leq 3$. This is mainly a result of the initial condition being a decision variable in the optimization problem, whereas in our approach we initialize $\bar{x}_0 = \hat{x}_0$. However the feasible region for our proposed approach is larger elsewhere, which is likely due to less constraint tightening leading to a larger terminal set $\mathcal{X}_N$.

Finally, we also compare the incurred cost of each approach by running simulations for $1000$ randomly sampled initial conditions that lie in the feasible regions for both methods. For each initial condition we set $x_0 = \hat{x}_0$ and run a disturbance free simulation with $k_f = 50$ time steps for both methods. The cost is computed using \eqref{eq:costfunc} over the finite horizon $k_f$. For this experiment the control scheme from \cite{MayneRakovicEtAl2006} incurred a cost that was on average $41\%$ higher than our proposed control scheme.

\subsection{Wind Energy Conversion System} \label{subsec:wec}
This system, described in \cite{Steinbuch1989}, includes models of the aerodynamics, rotor dynamics, drive train dynamics, and generator dynamics of a wind energy conversion system. The inputs to the system are a commanded rotor blade pitch angle, $\beta_r$, the commanded voltage of the generator output $u_{Fr}$, and the delay angle between the generator and grid voltages, $\alpha_r$. The system is linearized and has a state dimension of $n=10$. The measurements available include the output of a generator shaft speed sensor $\omega_{gm}$, the DC current from the generator $i_{dc}$, rotor speed $\omega_r$, and the mechanical torque in the generator shaft $T_m$. The control problem is to regulate the system to the nominal setpoint subject to bounded disturbances (with $\lVert w_k \rVert_\infty = \lVert v_k \rVert_\infty = .001$ for simplicity). Additionally we include constraints on the control, as well as on certain performance variables which include the relative angle of displacement in the generator shaft $\xi$, the generator shaft speed $\omega_g$, and the generator output current $i_{dc}$. The constraint on the relative angle of displacement in the generator shaft is used to control component fatigue, since the generator shaft is modeled as a flexible element. The model is discretized assuming a zero-order hold with sample time $\Delta t = 0.05$ seconds and the controller gains $K$ and $L$ are both given by the LQR gains computed with $Q=R=I$. The MPC cost function also uses these same weights. Simulated results for the performance variables $z$ can be seen in Figure \ref{fig:wecz}. 
\begin{figure}[ht]
    \centering
    \includegraphics[width=0.8\columnwidth]{./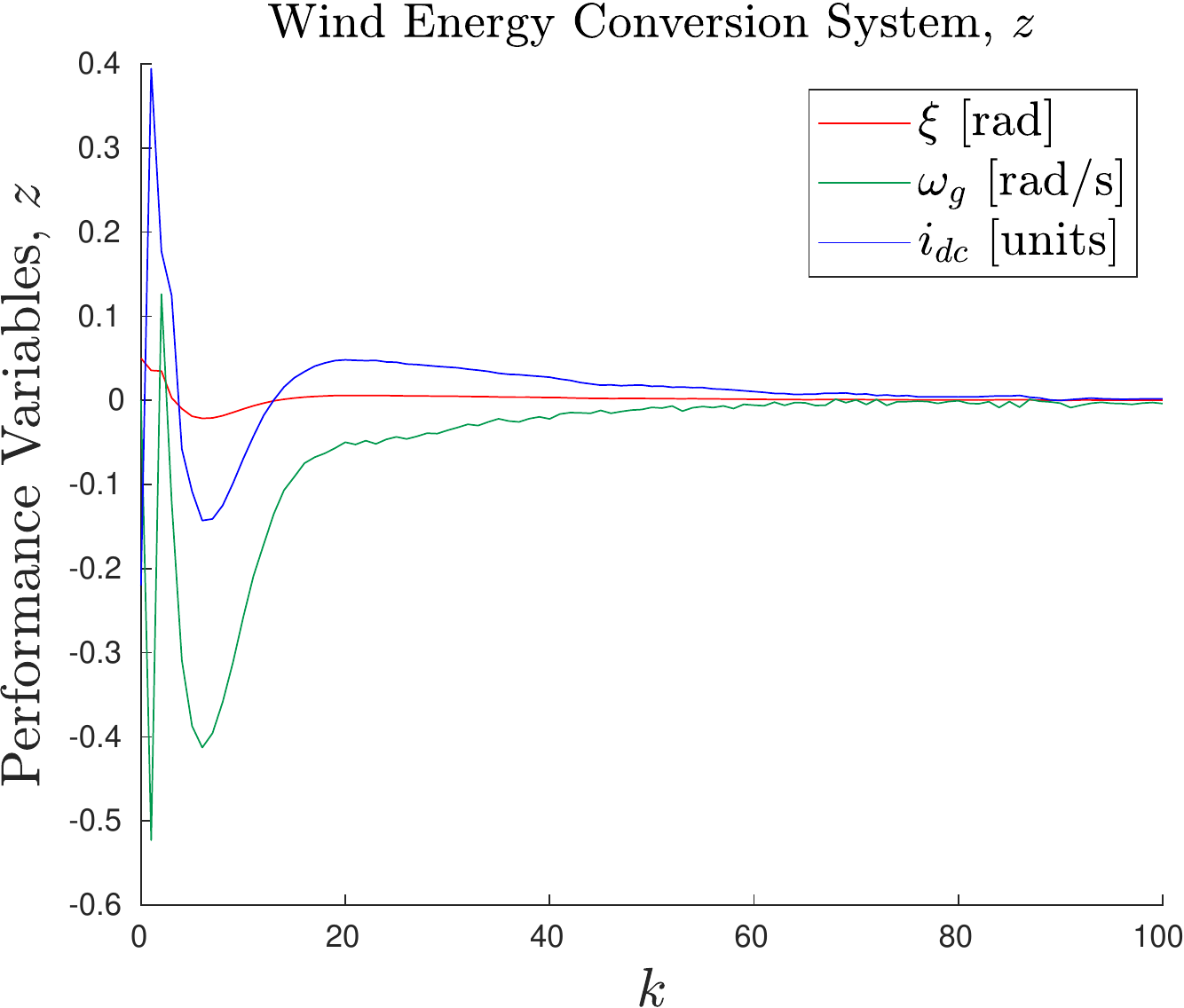}
    \caption{Simulated performance variable results for the wind energy conversion system described in Section \ref{subsec:wec}, including the relative angle of displacement in the generator shaft $\xi$, generator shaft speed $\omega_g$, and generator output current $i_{dc}$.}
    \label{fig:wecz}
\end{figure}

For this example system we can also compare our proposed RPI set computation method against that given in \cite{SchulzeDarupTeichrib2019}. Using the same comparison experiment discussed in Section \ref{subsec:compare} we obtain the results shown in Table \ref{tab:wecrpi}, which show that in such a comparison Algorithm \ref{alg} results in a less conservative RPI set than \cite{SchulzeDarupTeichrib2019} for the choice of $\epsilon = 0.01$.
\begin{table}[htbp]
\centering
\begin{tabular}{l|c|c|c|c|c|c|}
\cline{2-7}
 & $z_1$ & $z_2$ & $z_3$ & $u_1$ & $u_2$ & $u_3$ \\ \hline
\multicolumn{1}{|r|}{{\color{mypurple} $\mathcal{P}_\infty(\epsilon)$}, $(\%)$} & {\color{mypurple} 0.8} & {\color{mypurple} 0.3} & {\color{mypurple} 0.7} & {\color{mypurple} 0.9} & {\color{mypurple} 0.8} & {\color{mypurple} 0.8} \\ \hline
\end{tabular}
\caption{Comparing the constraint tightening using the RPI set from Algorithm \ref{alg} and the method in \cite{SchulzeDarupTeichrib2019} for the example in Section \ref{subsec:wec} with $\epsilon = 0.01$. The results are presented as \textit{percentage} increase in the amount of tightening compared to our proposed scheme. Thus a positive number indicates the RPI set yields more conservative tightened constraints.}
\label{tab:wecrpi}
\end{table}

\section{Conclusion} \label{sec:conclusion}
In this work we presented a tube-based robust output feedback MPC scheme that leads to efficient \textit{offline} controller synthesis and an efficient \textit{online} implementation. The efficiency of our approach was demonstrated to robustly control a wind energy conversion system with state space dimension $n=10$. In this work we also proposed a novel method for computing robust positively invariant sets which is simple, efficient, and is demonstrated to be effective when used for the proposed MPC scheme.

{\em Future Work:} Recent work in robust MPC has also yielded tube-based approaches where the tubes are not constant in time. It would be interesting to explore if a similar approach may be feasible for the output feedback setting based on our proposed RPI computation method. Additionally, it would be valuable to explore other properties of our proposed RPI computation method, such as \textit{a priori} bounds on required value of $k$ for algorithm success.

\bibliographystyle{IEEEtran}
\bibliography{main}

\end{document}